\documentclass[twocolumn,secnumarabic,amssymb, nofootinbib,aps, superscriptaddress, prl]{revtex4-1}

\usepackage{amsthm}
\usepackage{amsmath}
\usepackage{color}
\usepackage{adjustbox}
\usepackage{graphicx}
\usepackage{hyperref}

\setlength{\textheight}{9.5in}

\newcommand{\ket}[1]{\left| #1\right. \rangle}

\newcommand{\dd}{\mathrm{d}}

\newcommand{\SuppMat}{\cite{SuppMat}}
\newtheorem{thm}{Theorem}
\newtheorem*{thm*}{Theorem}
\newtheorem{lma}{Lemma}
\newtheorem*{lma*}{Lemma}

\begin{document}

\title{Arbitrarily small amount of measurement independence is sufficient to manifest quantum nonlocality}%

\author{Gilles P\"utz}%
\email[]{puetzg@unige.ch}
\affiliation{Group of Applied Physics, University of Geneva, CH-1211 Geneva 4, Switzerland.}
\author{Denis Rosset}
\affiliation{Group of Applied Physics, University of Geneva, CH-1211 Geneva 4, Switzerland.}
\author{Tomer Jack Barnea}
\affiliation{Group of Applied Physics, University of Geneva, CH-1211 Geneva 4, Switzerland.}
\author{Yeong-Cherng Liang}
\affiliation{Institute for Theoretical Physics, ETH Zurich, 8093 Zurich, Switzerland.}
\author{Nicolas Gisin}
\affiliation{Group of Applied Physics, University of Geneva, CH-1211 Geneva 4, Switzerland.}

\date{\today}%
\begin{abstract}
The use of Bell's theorem in any application or experiment relies on the assumption of free choice or, more precisely,  measurement independence, meaning that the measurements can be chosen freely. Here, we prove that even in the simplest Bell test --- one involving 2 parties each performing 2 binary-outcome measurements --- an \emph{arbitrarily small amount} of measurement independence is \emph{sufficient} to manifest quantum nonlocality. To this end, we introduce the notion of measurement dependent locality and show that the corresponding correlations form a convex polytope. These correlations can thus be characterized efficiently, e.g., using a finite set of Bell-like inequalities --- an observation that enables the systematic study of quantum nonlocality and related applications under limited measurement independence.
\end{abstract}

\maketitle

Since Bell's seminal work~\cite{Bell1964}, quantum nonlocality has gathered more and more interest, not only from a foundational point of view but also as a resource in several tasks like quantum key distribution~\cite{Ekert91,Acin06}, randomness expansion \cite{Colbeck2006,Pironio2010}, randomness extraction \cite{Bouda2014} or robust certification~\cite{Bancal11} and quantification~\cite{Moroder13} of quantum entanglement. It has led to the notion of device independence (see, eg.,~\cite{Brunner:RMP}), where the violation of a Bell inequality alone certifies properties that are useful to the task at hand, e.g., non-determinism of the outputs. In such a scenario, it is enough to consider black boxes that the parties give an input to and get an outcome from instead of having to consider the complex physical description of the implementation.

However, an important assumption has to be made in order for violations of Bell inequalities to exclude any local, and in particular deterministic explanation. Let us consider an adversarial scenario in which the boxes were in the hands of an adversary Eve before being given to the parties performing the experiment or protocol. The inputs for the boxes are chosen by local random number generators. If the adversary could influence these random number generators, then she can prepare the boxes with local strategies that appear to be nonlocal to the parties. The violation of a Bell-inequality therefore does not imply that the outcomes of the boxes are unknown to the adversary unless we assume that the inputs are independent of the adversary and that she cannot gain any information about them. This assumption is commonly referred to as measurement independence~\cite{Hall2011,Barrett2011}. Similar, but slightly stronger assumptions~\cite{Vona2014} are the assumptions of free choice~\cite{Colbeck2012} and free will~\cite{Bell2004}. Ensuring measurement independence in a Bell test seems impossible. However, if we abandon measurement independence completely and place no restriction at all on the adversary's influence, then it is impossible to show and exploit quantum nonlocality~\cite{Brans1988}.

In light of this, relaxations of this assumption have gathered attention and been studied in recent works. Hall~\cite{Hall2011}, Barrett \& Gisin~\cite{Barrett2011} and recently Thinh {\em et al.}~\cite{Thinh2013} studied how different possible relaxations influence well-known Bell-inequalities. Colbeck and Renner~\cite{Colbeck2012} introduced the idea of randomness amplification, in which a quantum protocol produces random outcomes even though complete free choice is not given. This was further developed by Gallego {\em et al.}~\cite{Gallego2013} and others~\cite{Mironowicz,Grudka,Ramanathan,Coudron2014,Miller2014}.

A common denominator of these works is that they study and use well-known Bell-inequalities. On the contrary, in this Letter, we derive Bell-like inequalities that are specifically suited for a measurement dependent scenario. Using these we show that quantum nonlocality allows for correlations that cannot be explained by any local models exploiting measurement dependence, even when the dependence is arbitrarily strong, as long as some measurement independence is retained (in the sense that we explain more precisely below).

\textit{Bell-locality.-} In a Bell test, two space-like separated parties, usually referred to as Alice and Bob, have access to two boxes. They can give these boxes an input, denoted by the random variables $X$ and $Y$ respectively, and each of the boxes gives back an outcome, $A$ and $B$ as depicted in Figure \ref{figmdlscenario}. In a quantum mechanical scenario, each box is given by a quantum system and the inputs determine measurements that are performed on this system. By performing many runs, the parties collect data that allows them to estimate with what probability a given input-pair $xy$ leads to an outcome-pair $ab$, i.e., they estimate the conditional probability distribution $P_{AB|XY}$. Note that we use capital letters for random variables and lower case letters for the values that the corresponding random variable can take.
The question a Bell test is trying to answer is whether these correlations could be explained by a (Bell-) local~\cite{Brunner:RMP} model, allowing for the existence of some underlying hidden strategy, denoted by $\Lambda$. We say that a correlation is  local if~\cite{Bell1964}
\begin{align}
\label{lassumption}
P(ab|xy)=\int\dd\lambda \rho(\lambda)P(a|x\lambda)P(b|y\lambda).
\end{align}
A correlation cannot be written as such an integral if and only if it violates a Bell inequality.

However, when performing an actual Bell test, an additional assumption has to be made: the inputs $X$ and $Y$ have to be chosen freely, i.e., uncorrelated to the hidden strategy $\Lambda$~\cite{Bell2004},
\begin{align}
P(xy|\lambda)=P(xy) \text{ } \forall\, x,y,\lambda
\end{align}
Following Hall~\citep{Hall2011}, Barrett and Gisin~\citep{Barrett2011}, we call this assumption \textit{measurement independence}.

\begin{figure}
\includegraphics[width=0.4\textwidth]{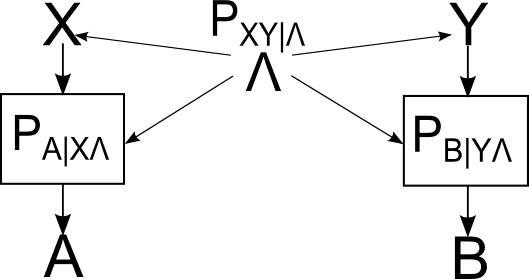}
\caption{Two boxes are programmed by a hidden common strategy $\Lambda$. This common strategy is also correlated with the inputs $X$ and $Y$ to the boxes. We call this measurement dependent locality.}
\label{figmdlscenario}
\end{figure}

\textit{Measurement dependence.-} We now analyse the case where complete measurement independence is not given. It is useful to consider, for this case, the full distribution $P_{ABXY}$, which --- contrary to the conditional distribution $P_{AB|XY}$ --- takes into account the distribution of the inputs $X$ and $Y$. We say that a correlation $P_{ABXY}$ is \textit{measurement dependent local} (MDL) if
\begin{align}
\label{mdlassumption}
P(abxy)=\int \dd\lambda \rho(\lambda)P(xy|\lambda)P(a|x\lambda)P(b|y\lambda).
\end{align}
As stated previously, if we allow measurement dependence and make no further assumptions, it is impossible to show that quantum mechanics is nonlocal. However, if one bounds the correlations between the inputs and the hidden strategy by imposing upper and lower bounds on the conditional distribution
\begin{align}
\label{lhassumption}
\ell\leq P(xy|\lambda)\leq h,
\end{align}
then interesting conclusions can be derived.
It is common to refer to such an assumption as a condition on the input-source~\cite{Colbeck2012}. Examples of such sources are the min-entropy sources~\cite{Thinh2013}, which have been studied in recent works~\cite{Bouda2014}.

We say that a correlation is \textit{measurement dependent nonlocal} for a given $\ell$ and $h$ if it cannot be expressed in the integral form given by (\ref{mdlassumption}) when assuming the lower and upper bounds coming from (\ref{lhassumption}).

\textit{The set of MDL-correlations.-} The set of measurement dependent local distributions for a given $\ell$ and $h$, i.e., the set of $P_{ABXY}$ satisfying (\ref{mdlassumption}) and (\ref{lhassumption}), turns out to be a convex set with a finite number of extremal points: a convex polytope~\SuppMat. The set can thus be fully characterized using a finite set of Bell-like inequalities: a distribution is measurement dependent nonlocal if and only if it violates at least one of these MDL-inequalities.

\textit{Quantum violation of a specific MDL-inequality.-} In the following, we focus on analysing the simplest possible nonlocality scenario: the inputs $X$ and $Y$ and outputs $A$ and $B$ of both parties are taken to be binary random variables, taking values $0$ or $1$. In terms of $P(abxy)$, one useful parametrized MDL-inequality, derived using the polytope structure of the MDL-set, is given by~\SuppMat
\begin{align}
\label{goldenineq}
\ell P(0000) - h\big(P(0101)+P(1010)+P(0011)\big)\stackrel{MDL}{\le} 0.
\end{align}
This inequality allows us to prove our main result.

\textit{Main result:} Quantum mechanics is measurement dependent nonlocal for any $\ell>0$ and for any $h$.

A state that exhibits this property is the 2-qubit state
\begin{align}
\label{goldenstate}
\left| Au\right. \rangle = \frac{1}{\sqrt{3}}\left(\frac{\sqrt{5}-1}{2}\left| 00\right. \rangle+\frac{\sqrt{5}+1}{2}\left|11 \right. \rangle\right),
\end{align}
on which Alice and Bob perform the rank 1 projective measurements defined by $\ket{A_0(\theta)}=\cos\theta\ket{0}+\sin\theta\ket{1}$, $\ket{A_1(\theta)}=\mid A_0(\theta-\frac{\pi}{4})\rangle$, $\ket{B_0(\theta)}=\ket{A_0(-\theta)}$ and $\ket{B_1(\theta)}=\ket{A_1(-\theta)}$ with  $\theta=\arccos{\sqrt{\frac{1}{2}+\frac{1}{\sqrt{5}}}}$.

Evaluating the left-hand side of inequality (\ref{goldenineq}) for this state and these measurements, we find $\ell\frac{1}{12}P_{XY}(00)$, where $P_{XY}(00)$ is the probability of choosing the inputs $x=0$ and $y=0$. This proves that measurement dependent local distributions cannot explain quantum correlations, as long as it is impossible for any hidden strategy to exclude the possibility that a certain input-pair occurs, i.e., $\ell>0$\footnote{In fact, any quantum correlation violating Hardy's paradox~\cite{Hardy93} violates inequality (\ref{goldenineq}).}. Note that this condition also excludes the possibility of having fully dependent inputs for one of the two parties since $P(x|\lambda)=0$ implies $P(xy|\lambda)=0$. A visual representation of inequality (\ref{goldenineq}) can be found in Fig. 2.

\textit{MDL-correlations satisfying additional physical constraints.-} Motivated by the idea that information needs a physical carrier, most nonlocality experiments are conducted under the assumption that no information can be transmitted between the parties by the use of such boxes, for example by performing the experiment in spacelike seperation. In other words, the input to Alice's box cannot influence the outcome on Bob's side and vice versa, i.e.,
\begin{align}
P(a|xy)&=P(a|xy')\text{ } \forall a,x,y,y'\nonumber\\
P(b|xy)&=P(b|x'y)\text{ } \forall b,x,x',y.
\label{nosignaling}
\end{align}
These are the nonsignaling assumptions~\cite{Popescu1994,Barrett05}. Nonsignaling, as opposed to measurement independence, can in principle be verified in a protocol by checking the equalities (\ref{nosignaling}).

The measurement dependent local correlations given by (\ref{mdlassumption}) are not inherently nonsignaling due to the hidden strategy $\Lambda$ establishing correlations between Alice's input $X$ and Bob's output $B$ and vice versa. Since, as stated above, these equalities can in principle be verified, we impose that they are satisfied in the following.

Additionally the experimenters can observe the input distribution $P_{XY}$ given by $P(xy)=\int\dd\lambda \rho(\lambda)P(xy|\lambda)$. Therefore any experiment or protocol involving Bell tests can make use of this knowledge. We consider, from here onwards, the case in which the input-distribution is observed to be uniform, meaning that 
\begin{align}
P(xy)=P(x'y') \text{ }\forall x,x',y,y'.
\label{pxyuniform}
\end{align}

\textit{Comparison with CHSH.-} Instead of using inequality (\ref{goldenineq}), one can try to show the measurement dependent nonlocality of quantum theory by using the well-known Clauser-Horne-Shimony-Holt (CHSH) expression~\cite{Clauser1969}
\begin{align}
\label{CHSH}
\text{CHSH}=\sum_{abxy}(-1)^{a+b+xy}P(ab|xy).
\end{align}
It is well-known that quantum mechanics respects Cirel'son's bound~\cite{Tsirelson}
\begin{align}
\text{CHSH}\stackrel{\mathcal{Q}}{\le}2\sqrt{2}.
\end{align}
Therefore, if for a given $\ell$ and $h$, the MDL-set given by (\ref{mdlassumption}) and (\ref{lhassumption}), allows for correlations with $\text{CHSH}\geq 2\sqrt{2}$, then the inequality cannot be used to reveal measurement dependent nonlocality.

Using the polytope structure of the MDL-set, we find that MDL-correlations, even with the additional constraints of nonsignaling (\ref{nosignaling}) and uniform inputs (\ref{pxyuniform}), can reach~\SuppMat
\begin{align}
\text{CHSH}=4(1-2\ell'),
\label{mdlchsh}
\end{align}
where $\ell'=\max(\ell,1-3h)$. Comparing this value to the quantum bound of $2\sqrt{2}$, we find that for $\ell'\leq \frac{2-\sqrt{2}}{4}$, it is impossible for CHSH to reveal the measurement dependent nonlocal behavior of quantum mechanics. Inequality (\ref{goldenineq}) on the other hand is able to reveal this $\forall \ell'>0$.

\begin{figure}[ht]
\includegraphics[width=0.4\textwidth]{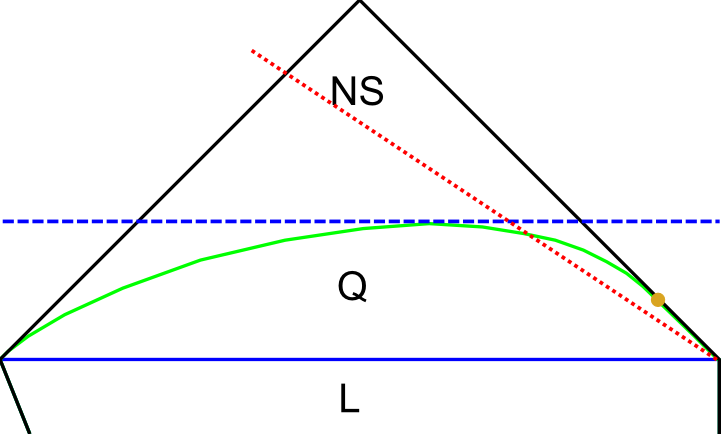}

\caption{(Color online) A 2-dimensional slice in the nonsignaling space. Below the solid horizontal (blue) line (CHSH) is the set of Bell-local distributions. The quantum set is delimited by the curved (green) line. The set of nonsignaling distributions lies below the black triangle. For $h=1-3\ell$, inequality (\ref{goldenineq}) tilts from CHSH ($\ell=\frac{1}{4}$) via the dotted (red) line ($\ell\approx \frac{2-\sqrt{2}}{4}$) to the nonsignaling border ($\ell=0$). The correlation given by measuring $\ket{Au}$ (golden dot) violates this inequality $\forall\ell>0$. We also display the CHSH inequality with an adjusted MDL bound for $\ell\approx \frac{2-\sqrt{2}}{4}$ (dashed blue line). It can be seen that the nonlocality of quantum mechanics cannot be shown using this inequality.}
\label{GoldenSlice}
\end{figure}

\textit{Input sources with $\ell=0$.- } One specific input-source is the min-entropy source~\cite{Thinh2013}. The conditional min-entropy is defined as
\begin{align}
H_{min}(XY|\lambda)=-\log_2\max_{x,y}P(xy|\lambda).
\end{align}
Using a min-entropy source means that $H_{min}(XY|\lambda)$ is lower bounded by some value $H$ $\forall\lambda$. In our language it corresponds to setting the lower bound $\ell=0$ and the upper bound $h=2^{-H}$ in condition (\ref{lhassumption}).

For $X$, $Y$, $A$, $B$ $\in \{0,1\}$ and specific values of $h$, we obtain a complete set of  MDL-inequalities. We say that a set of inequalities is complete if every measurement dependent nonlocal distribution violates at least one inequality in this set while measurement dependent local distributions, cf. \eqref{mdlassumption},  respect all inequalities. For example, in the corresponding Bell-locality scenario it is known that the CHSH-inequalities form a complete set.

A first observation to make is that maximal min-entropy, meaning that $h=\frac{1}{4}$ implies that $P_{XY|\Lambda=\lambda}$ is uniform $\forall\lambda$. This conclusion follows from the fact that every probability distribution has to be normalized, i.e., $\sum_{x,y}P(xy|\lambda)=1\text{  }\forall\lambda$, and that probabilities are non-negative. Since the inputs are not biased by $\lambda$, this corresponds to imposing measurement independence and is therefore equivalent to the standard Bell-locality. As already stated, the CHSH-inequalities form a complete set in this case.

\begin{table*}[ht]

\begin{adjustbox}{width=\textwidth,totalheight=\textheight,keepaspectratio}
\begin{tabular}{c|c|c|c|c|c|c|c|c}
\large
$1$&$P_{A|X}(0|0)$&$P_{A|X}(0|1)$&$P_{B|Y}(0|0)$&$P(00|00)$&$P(00|10)$&$P_{B|Y}(0|1)$&$P(00|01)$&$P(00|11)$\\
\hline
$12h^2-11h+2$&$2h-1$&$4h-1$&$2h-1$&$2h$&$2-6h$&$4h-1$&$2-6h$&$-2h$\\
$12h^2-11h+2$&$4h-1$&$3h-1$&$4h-1$&$-h$&$1-3h$&$3h-1$&$1-3h$&$1-3h$\\
$11h^2-8h+1$&$-4h^2+5h-1$&$5h^2-4h+1$&$-4h^2+5h-1$&$-3h^2-2h+1$&$3h^2-2h$&$5h^2-4h+1$&$3h^2-2h$&$-9h^2+9h-2$\\
$8h^2-7h+1$&$4h^2$&$0$&$-4h^2+5h-1$&$-h$&$1-3h$&$-4h^2+2h$&$-h$&$3h-1$\\
$13h^2-8h+1$&$-8h^2+6h-1$&$-5h^2+2h$&$-h^2+h$&$5h^2-2h$&$h^2-h$&$0$&$3h^2-4h+1$&$-3h^2+4h-1$\\
$20h^2-13h+2$&$-8h^2+6h-1$&$-7h^2+5h-1$&$-8h^2+6h-1$&$5h^2-2h$&$3h^2-4h+1$&$-7h^2+5h-1$&$3h^2-4h+1$&$-h^2+h$\\
$1-4h$&$3h-1$&$0$&$3h-1$&$1-3h$&$h$&$0$&$h$&$-h$\\
\end{tabular}
\end{adjustbox}

\normalsize
\caption{Conjectured families of MDL inequalities for $h\in\rbrack \frac{1}{4},\frac{1}{3}\lbrack$. The Table contains the coefficients belonging to each term (given in the first row). We denote by $P_{A|X}(a|x)$ the marginal distribution over Alice's ouput $A$ conditioned on her input $X$ and similarly for Bob. The expression being $\leq 0$ is a representative MDL inequality from each family.}
\label{conjineqsCG}
\end{table*}

Another special value is $h=\frac{1}{3}$. It turns out that if $h\geq\frac{1}{3}$, measurement dependent local correlations can reproduce any nonsignaling distributions. Since the set of quantum correlations is a strict subset of the set of non-signaling correlations, it is therefore impossible to see measurement dependent nonlocality in this case. The reason this does not occur for $h<\frac{1}{3}$ is due to the fact that for these values of $h$ the normalization and non-negativity of probabilities imply that no input-pair can be excluded, i.e., $P(xy|\lambda)>0$ $\forall x,y,\lambda$.

The interesting case is therefore $h\in]\frac{1}{4},\frac{1}{3}[$. For \emph{each fixed value} of $h$ in this interval, one can make use of a standard software package~\cite{Porta} to obtain the complete set of inequalities characterizing the set of MDL correlations. We performed this computation for several values of $h\in]\frac{1}{4},\frac{1}{3}[$. For each of these chosen values\SuppMat, we always found 7 families of inequalities, where we say that two inequalities belong to the same family if one can be obtained from the other by simply relabeling the inputs and outputs or by exchanging the roles of the two parties. As a function of $h$, the inequalities we found can be expressed as in Table \ref{conjineqsCG}. Based on the above observation, we conjecture that the inequalities of Table \ref{conjineqsCG} form a complete set for {\em all} $h\in]\frac{1}{4},\frac{1}{3}[$~\SuppMat. A visual representation of the evolution of the MDL-polytope as $h$ goes from $\frac{1}{4}$ to $\frac{1}{3}$ can be seen in Fig. 3.

It is interesting to note that the well-known CHSH-inequality is not among these 7 families. From (\ref{mdlchsh}), we see that for $h\geq\frac{2+\sqrt{2}}{12}\approx 0.2845$, quantum mechanics can no longer outperform the measurement dependent local correlations when looking at CHSH as given by (\ref{CHSH}). This was already shown by  Thinh \textit{et al}. ~\cite{Thinh2013}. CHSH is therefore only useful up to this critical value of $h$. On the other hand, all 7 families introduced in Table \ref{conjineqsCG} can be violated for values larger than $\frac{2+\sqrt{2}}{12}$. In fact, inequalities 6 and 7 can reveal quantum nonlocality for all $h$ below the critical value of $\frac{1}{3}$. This shows that the complete set presented here is better suited for the task of witnessing measurement dependent quantum nonlocality than CHSH.

\begin{figure}[ht]
\includegraphics[width=0.4\textwidth]{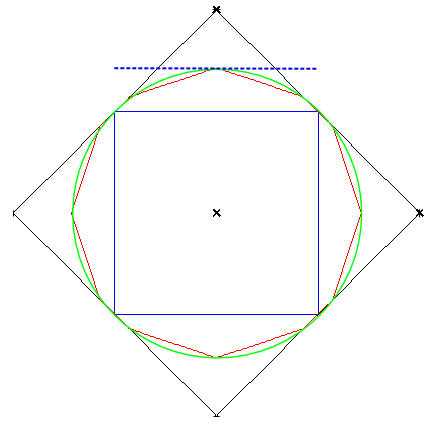}

\caption{(Color online) A different 2-dimensional slice through the nonsignaling space. We show the local polytope (inner square, blue), the nonsignaling polytope (outer, rotated square, black) as well as the MDL-polytope for a min-entropy-source with $h\approx \frac{2+\sqrt{2}}{12}$ (thin line between the squares, red). Since the local polytope corresponds to $h=\frac{1}{4}$ and the nonsignaling polytope to $h=\frac{1}{3}$, it can be seen how the MDL-polytope transforms as a function of $h$. The quantum set in this slice is bounded by the green circle. The CHSH inequality with an adjusted bound for $h\approx \frac{2+\sqrt{2}}{12}$ (dashed line, blue) cannot be used to reveal the nonlocality of quantum mechanics for $h\geq \frac{2+\sqrt{2}}{12}$.}
\label{UPRPRslice}
\end{figure}

\textit{Conclusions.-} Bell-locality, the essential concept when working in any kind of device independent scenario, includes the untestable assumption of measurement independence. We have analyzed what happens when this assumption is relaxed and found that, as with Bell-locality, it is sufficient to work with a finite number of  Bell-like inequalities. Using one such inequality, we showed that the nonlocality of quantum mechanics can be manifested as long as an arbitrarily small amount of free choice is guaranteed. Surprisingly, the simplest nontrivial scenario involving only two parties (and binary-outcome measurements) is already sufficient to arrive at this conclusion.

We have also presented inequalities that are better suited to measurement dependent scenarios than the CHSH-inequality. In fact, with the additional assumption of nonsignaling and uniform observed inputs, we obtained a set of Bell-like inequalities --- which we conjecture to be complete --- for the measurement dependent local set of two parties, two inputs (with a min-entropy input source), and two outputs.  In general, our observations that MDL correlations can be fully characterized using Bell-like inequalities provides a powerful framework for the study of measurement dependent quantum nonlocality and related applications. For instance, the MDL polytope presented in this Letter may become a useful tool for the analysis of tasks like randomness extraction\cite{Bouda2014} and amplification~\cite{Colbeck2012}. So far, inequalities suitable for such tasks were guessed or inspired by the local polytope. Inequalities derived or inspired from the MDL polytope should be better suited for the task. Specifically, it would be interesting to see whether the bipartite scenario with binary inputs and outputs could indeed be sufficient to perform a randomness amplification protocol using one of the inequalities presented here.

The framework introduced in this Letter allows one to study further other possible assumption on the input source. A natural possibility would be that any correlations between the random number generators that the two parties use to determine their inputs must come from a local hidden variable, i.e., $P(xy|\lambda)=P(x|\lambda)P(y|\lambda)$, a problem that we shall leave for future research.

\textit{Acknowledgements} We acknowledge helpful discussions with Valerio Scarani and Jean-Daniel Bancal as well as financial support from the European projects CHIST-ERA DIQIP, the COST Action MP1006, the ERC grant 258932,  the European Union Seventh Framework Programme via the RAQUEL project (grant agreement no 323970) and the Swiss project NCCR-QSIT.

\bibliography{Maintext}

\begin{thebibliography}{30}%
\makeatletter
\providecommand \@ifxundefined [1]{%
 \@ifx{#1\undefined}
}%
\providecommand \@ifnum [1]{%
 \ifnum #1\expandafter \@firstoftwo
 \else \expandafter \@secondoftwo
 \fi
}%
\providecommand \@ifx [1]{%
 \ifx #1\expandafter \@firstoftwo
 \else \expandafter \@secondoftwo
 \fi
}%
\providecommand \natexlab [1]{#1}%
\providecommand \enquote  [1]{``#1''}%
\providecommand \bibnamefont  [1]{#1}%
\providecommand \bibfnamefont [1]{#1}%
\providecommand \citenamefont [1]{#1}%
\providecommand \href@noop [0]{\@secondoftwo}%
\providecommand \href [0]{\begingroup \@sanitize@url \@href}%
\providecommand \@href[1]{\@@startlink{#1}\@@href}%
\providecommand \@@href[1]{\endgroup#1\@@endlink}%
\providecommand \@sanitize@url [0]{\catcode `\\12\catcode `\$12\catcode
  `\&12\catcode `\#12\catcode `\^12\catcode `\_12\catcode `\%12\relax}%
\providecommand \@@startlink[1]{}%
\providecommand \@@endlink[0]{}%
\providecommand \url  [0]{\begingroup\@sanitize@url \@url }%
\providecommand \@url [1]{\endgroup\@href {#1}{\urlprefix }}%
\providecommand \urlprefix  [0]{URL }%
\providecommand \Eprint [0]{\href }%
\providecommand \doibase [0]{http://dx.doi.org/}%
\providecommand \selectlanguage [0]{\@gobble}%
\providecommand \bibinfo  [0]{\@secondoftwo}%
\providecommand \bibfield  [0]{\@secondoftwo}%
\providecommand \translation [1]{[#1]}%
\providecommand \BibitemOpen [0]{}%
\providecommand \bibitemStop [0]{}%
\providecommand \bibitemNoStop [0]{.\EOS\space}%
\providecommand \EOS [0]{\spacefactor3000\relax}%
\providecommand \BibitemShut  [1]{\csname bibitem#1\endcsname}%
\let\auto@bib@innerbib\@empty
\bibitem [{\citenamefont {Bell}(1964)}]{Bell1964}%
  \BibitemOpen
  \bibfield  {author} {\bibinfo {author} {\bibfnamefont {J.}~\bibnamefont
  {Bell}},\ }\href@noop {} {\bibfield  {journal} {\bibinfo  {journal}
  {Physics}\ }\textbf {\bibinfo {volume} {1}},\ \bibinfo {pages} {195}
  (\bibinfo {year} {1964})}\BibitemShut {NoStop}%
\bibitem [{\citenamefont {Ekert}(1991)}]{Ekert91}%
  \BibitemOpen
  \bibfield  {author} {\bibinfo {author} {\bibfnamefont {A.~K.}\ \bibnamefont
  {Ekert}},\ }\href {\doibase 10.1103/PhysRevLett.67.661} {\bibfield  {journal}
  {\bibinfo  {journal} {Phys. Rev. Lett.}\ }\textbf {\bibinfo {volume} {67}},\
  \bibinfo {pages} {661} (\bibinfo {year} {1991})}\BibitemShut {NoStop}%
\bibitem [{\citenamefont {Acin}\ \emph {et~al.}(2006)\citenamefont {Acin},
  \citenamefont {Gisin},\ and\ \citenamefont {Masanes}}]{Acin06}%
  \BibitemOpen
  \bibfield  {author} {\bibinfo {author} {\bibfnamefont {A.}~\bibnamefont
  {Acin}}, \bibinfo {author} {\bibfnamefont {N.}~\bibnamefont {Gisin}}, \ and\
  \bibinfo {author} {\bibfnamefont {L.}~\bibnamefont {Masanes}},\ }\href
  {\doibase 10.1103/PhysRevLett.97.120405} {\bibfield  {journal} {\bibinfo
  {journal} {Phys. Rev. Lett.}\ }\textbf {\bibinfo {volume} {97}},\ \bibinfo
  {pages} {120405} (\bibinfo {year} {2006})}\BibitemShut {NoStop}%
\bibitem [{\citenamefont {Colbeck}(2006)}]{Colbeck2006}%
  \BibitemOpen
  \bibfield  {author} {\bibinfo {author} {\bibfnamefont {R.}~\bibnamefont
  {Colbeck}},\ }\emph {\bibinfo {title} {{Quantum And Relativistic Protocols
  For Secure Multi-Party Computation}}},\ \href@noop {} {Ph.D. thesis},\
  \bibinfo  {school} {University of Camebridge} (\bibinfo {year} {2006}),\
  \Eprint {http://arxiv.org/abs/arXiv:0911.3814v2} {arXiv:arXiv:0911.3814v2}
  \BibitemShut {NoStop}%
\bibitem [{\citenamefont {Pironio}\ \emph {et~al.}(2010)\citenamefont
  {Pironio}, \citenamefont {Acin}, \citenamefont {Massar}, \citenamefont {Boyer
  de~la Giroday}, \citenamefont {Matsukevich}, \citenamefont {Maunz},
  \citenamefont {Olmschenk}, \citenamefont {Hayes}, \citenamefont {Luo},
  \citenamefont {Mannin},\ and\ \citenamefont {Monroe}}]{Pironio2010}%
  \BibitemOpen
  \bibfield  {author} {\bibinfo {author} {\bibfnamefont {S.}~\bibnamefont
  {Pironio}}, \bibinfo {author} {\bibfnamefont {A.}~\bibnamefont {Acin}},
  \bibinfo {author} {\bibfnamefont {S.}~\bibnamefont {Massar}}, \bibinfo
  {author} {\bibfnamefont {S.}~\bibnamefont {Boyer de~la Giroday}}, \bibinfo
  {author} {\bibfnamefont {D.~N.}\ \bibnamefont {Matsukevich}}, \bibinfo
  {author} {\bibfnamefont {P.}~\bibnamefont {Maunz}}, \bibinfo {author}
  {\bibfnamefont {S.}~\bibnamefont {Olmschenk}}, \bibinfo {author}
  {\bibfnamefont {D.}~\bibnamefont {Hayes}}, \bibinfo {author} {\bibfnamefont
  {L.}~\bibnamefont {Luo}}, \bibinfo {author} {\bibfnamefont {T.~A.}\
  \bibnamefont {Mannin}}, \ and\ \bibinfo {author} {\bibfnamefont
  {C.}~\bibnamefont {Monroe}},\ }\href {\doibase 10.1038/nature09008}
  {\bibfield  {journal} {\bibinfo  {journal} {Nature (London)}\ }\textbf
  {\bibinfo {volume} {464}},\ \bibinfo {pages} {1021} (\bibinfo {year}
  {2010})}\BibitemShut {NoStop}%
\bibitem [{\citenamefont {Bouda}\ \emph {et~al.}(2014)\citenamefont {Bouda},
  \citenamefont {Paw\l{}owski}, \citenamefont {Pivoluska},\ and\ \citenamefont
  {Plesch}}]{Bouda2014}%
  \BibitemOpen
  \bibfield  {author} {\bibinfo {author} {\bibfnamefont {J.}~\bibnamefont
  {Bouda}}, \bibinfo {author} {\bibfnamefont {M.}~\bibnamefont {Paw\l{}owski}},
  \bibinfo {author} {\bibfnamefont {M.}~\bibnamefont {Pivoluska}}, \ and\
  \bibinfo {author} {\bibfnamefont {M.}~\bibnamefont {Plesch}},\ }\href
  {\doibase 10.1103/PhysRevA.90.032313} {\bibfield  {journal} {\bibinfo
  {journal} {Phys. Rev. A}\ }\textbf {\bibinfo {volume} {90}},\ \bibinfo
  {pages} {032313} (\bibinfo {year} {2014})}\BibitemShut {NoStop}%
\bibitem [{\citenamefont {Bancal}\ \emph {et~al.}(2011)\citenamefont {Bancal},
  \citenamefont {Gisin}, \citenamefont {Liang},\ and\ \citenamefont
  {Pironio}}]{Bancal11}%
  \BibitemOpen
  \bibfield  {author} {\bibinfo {author} {\bibfnamefont {J.-D.}\ \bibnamefont
  {Bancal}}, \bibinfo {author} {\bibfnamefont {N.}~\bibnamefont {Gisin}},
  \bibinfo {author} {\bibfnamefont {Y.-C.}\ \bibnamefont {Liang}}, \ and\
  \bibinfo {author} {\bibfnamefont {S.}~\bibnamefont {Pironio}},\ }\href
  {\doibase 10.1103/PhysRevLett.106.250404} {\bibfield  {journal} {\bibinfo
  {journal} {Phys. Rev. Lett.}\ }\textbf {\bibinfo {volume} {106}},\ \bibinfo
  {pages} {250404} (\bibinfo {year} {2011})}\BibitemShut {NoStop}%
\bibitem [{\citenamefont {Moroder}\ \emph {et~al.}(2013)\citenamefont
  {Moroder}, \citenamefont {Bancal}, \citenamefont {Liang}, \citenamefont
  {Hofmann},\ and\ \citenamefont {G\"uhne}}]{Moroder13}%
  \BibitemOpen
  \bibfield  {author} {\bibinfo {author} {\bibfnamefont {T.}~\bibnamefont
  {Moroder}}, \bibinfo {author} {\bibfnamefont {J.-D.}\ \bibnamefont {Bancal}},
  \bibinfo {author} {\bibfnamefont {Y.-C.}\ \bibnamefont {Liang}}, \bibinfo
  {author} {\bibfnamefont {M.}~\bibnamefont {Hofmann}}, \ and\ \bibinfo
  {author} {\bibfnamefont {O.}~\bibnamefont {G\"uhne}},\ }\href {\doibase
  10.1103/PhysRevLett.111.030501} {\bibfield  {journal} {\bibinfo  {journal}
  {Phys. Rev. Lett.}\ }\textbf {\bibinfo {volume} {111}},\ \bibinfo {pages}
  {030501} (\bibinfo {year} {2013})}\BibitemShut {NoStop}%
\bibitem [{\citenamefont {Brunner}\ \emph {et~al.}(2014)\citenamefont
  {Brunner}, \citenamefont {Cavalcanti}, \citenamefont {Pironio}, \citenamefont
  {Scarani},\ and\ \citenamefont {Wehner}}]{Brunner:RMP}%
  \BibitemOpen
  \bibfield  {author} {\bibinfo {author} {\bibfnamefont {N.}~\bibnamefont
  {Brunner}}, \bibinfo {author} {\bibfnamefont {D.}~\bibnamefont {Cavalcanti}},
  \bibinfo {author} {\bibfnamefont {S.}~\bibnamefont {Pironio}}, \bibinfo
  {author} {\bibfnamefont {V.}~\bibnamefont {Scarani}}, \ and\ \bibinfo
  {author} {\bibfnamefont {S.}~\bibnamefont {Wehner}},\ }\href@noop {}
  {\bibfield  {journal} {\bibinfo  {journal} {Rev. Mod. Phys.}\ }\textbf
  {\bibinfo {volume} {86}},\ \bibinfo {pages} {839} (\bibinfo {year}
  {2014})}\BibitemShut {NoStop}%
\bibitem [{\citenamefont {Hall}(2011)}]{Hall2011}%
  \BibitemOpen
  \bibfield  {author} {\bibinfo {author} {\bibfnamefont {M.~J.~W.}\
  \bibnamefont {Hall}},\ }\href {\doibase 10.1103/PhysRevA.84.022102}
  {\bibfield  {journal} {\bibinfo  {journal} {Phys. Rev. A}\ }\textbf {\bibinfo
  {volume} {84}},\ \bibinfo {pages} {022102} (\bibinfo {year}
  {2011})}\BibitemShut {NoStop}%
\bibitem [{\citenamefont {Barrett}\ and\ \citenamefont
  {Gisin}(2011)}]{Barrett2011}%
  \BibitemOpen
  \bibfield  {author} {\bibinfo {author} {\bibfnamefont {J.}~\bibnamefont
  {Barrett}}\ and\ \bibinfo {author} {\bibfnamefont {N.}~\bibnamefont
  {Gisin}},\ }\href {\doibase 10.1103/PhysRevLett.106.100406} {\bibfield
  {journal} {\bibinfo  {journal} {Phys. Rev. Lett.}\ }\textbf {\bibinfo
  {volume} {106}},\ \bibinfo {pages} {100406} (\bibinfo {year}
  {2011})}\BibitemShut {NoStop}%
\bibitem [{\citenamefont {Vona}\ and\ \citenamefont {Liang}(2014)}]{Vona2014}%
  \BibitemOpen
  \bibfield  {author} {\bibinfo {author} {\bibfnamefont {N.}~\bibnamefont
  {Vona}}\ and\ \bibinfo {author} {\bibfnamefont {Y.-C.}\ \bibnamefont
  {Liang}},\ }\href@noop {} {\bibfield  {journal} {\bibinfo  {journal}
  {arXiv:1307.4365v2}\ } (\bibinfo {year} {2014})}\BibitemShut {NoStop}%
\bibitem [{\citenamefont {Colbeck}\ and\ \citenamefont
  {Renner}(2012)}]{Colbeck2012}%
  \BibitemOpen
  \bibfield  {author} {\bibinfo {author} {\bibfnamefont {R.}~\bibnamefont
  {Colbeck}}\ and\ \bibinfo {author} {\bibfnamefont {R.}~\bibnamefont
  {Renner}},\ }\href {\doibase 10.1038/nphys2300} {\bibfield  {journal}
  {\bibinfo  {journal} {Nature Phys.}\ }\textbf {\bibinfo {volume} {8}},\
  \bibinfo {pages} {450} (\bibinfo {year} {2012})}\BibitemShut {NoStop}%
\bibitem [{\citenamefont {Bell}(2004)}]{Bell2004}%
  \BibitemOpen
  \bibfield  {author} {\bibinfo {author} {\bibfnamefont {J.~S.}\ \bibnamefont
  {Bell}},\ }\href@noop {} {\emph {\bibinfo {title} {Speakable and Unspeakable
  in Quantum Mechanics}}}\ (\bibinfo  {publisher} {Cambridge University
  Press},\ \bibinfo {address} {Cambridge},\ \bibinfo {year} {2004})\BibitemShut
  {NoStop}%
\bibitem [{\citenamefont {Brans}(1988)}]{Brans1988}%
  \BibitemOpen
  \bibfield  {author} {\bibinfo {author} {\bibfnamefont {C.~H.}\ \bibnamefont
  {Brans}},\ }\href@noop {} {\bibfield  {journal} {\bibinfo  {journal} {Int. J.
  Theor. Phys.}\ }\textbf {\bibinfo {volume} {27}},\ \bibinfo {pages} {219}
  (\bibinfo {year} {1988})}\BibitemShut {NoStop}%
\bibitem [{\citenamefont {Thinh}\ \emph {et~al.}(2013)\citenamefont {Thinh},
  \citenamefont {Sheridan},\ and\ \citenamefont {Scarani}}]{Thinh2013}%
  \BibitemOpen
  \bibfield  {author} {\bibinfo {author} {\bibfnamefont {L.~P.}\ \bibnamefont
  {Thinh}}, \bibinfo {author} {\bibfnamefont {L.}~\bibnamefont {Sheridan}}, \
  and\ \bibinfo {author} {\bibfnamefont {V.}~\bibnamefont {Scarani}},\ }\href
  {\doibase 10.1103/PhysRevA.87.062121} {\bibfield  {journal} {\bibinfo
  {journal} {Phys. Rev. A}\ }\textbf {\bibinfo {volume} {87}},\ \bibinfo
  {pages} {062121} (\bibinfo {year} {2013})}\BibitemShut {NoStop}%
\bibitem [{\citenamefont {Gallego}\ \emph {et~al.}(2013)\citenamefont
  {Gallego}, \citenamefont {Masanes}, \citenamefont {{De La Torre}},
  \citenamefont {Dhara}, \citenamefont {Aolita},\ and\ \citenamefont
  {Ac\'{\i}n}}]{Gallego2013}%
  \BibitemOpen
  \bibfield  {author} {\bibinfo {author} {\bibfnamefont {R.}~\bibnamefont
  {Gallego}}, \bibinfo {author} {\bibfnamefont {L.}~\bibnamefont {Masanes}},
  \bibinfo {author} {\bibfnamefont {G.}~\bibnamefont {{De La Torre}}}, \bibinfo
  {author} {\bibfnamefont {C.}~\bibnamefont {Dhara}}, \bibinfo {author}
  {\bibfnamefont {L.}~\bibnamefont {Aolita}}, \ and\ \bibinfo {author}
  {\bibfnamefont {A.}~\bibnamefont {Ac\'{\i}n}},\ }\href {\doibase
  10.1038/ncomms3654} {\bibfield  {journal} {\bibinfo  {journal} {Nat.
  Commun.}\ }\textbf {\bibinfo {volume} {4}},\ \bibinfo {pages} {2654}
  (\bibinfo {year} {2013})}\BibitemShut {NoStop}%
\bibitem [{\citenamefont {Mironowicz}\ \emph {et~al.}(2014)\citenamefont
  {Mironowicz}, \citenamefont {Gallego},\ and\ \citenamefont
  {Pawlowski}}]{Mironowicz}%
  \BibitemOpen
  \bibfield  {author} {\bibinfo {author} {\bibfnamefont {P.}~\bibnamefont
  {Mironowicz}}, \bibinfo {author} {\bibfnamefont {R.}~\bibnamefont {Gallego}},
  \ and\ \bibinfo {author} {\bibfnamefont {M.}~\bibnamefont {Pawlowski}},\
  }\href@noop {} {\bibfield  {journal} {\bibinfo  {journal}
  {arXiv:1301.7722v2}\ } (\bibinfo {year} {2014})}\BibitemShut {NoStop}%
\bibitem [{\citenamefont {Grudka}\ \emph {et~al.}(2014)\citenamefont {Grudka},
  \citenamefont {Horodecki}, \citenamefont {Horodecki}, \citenamefont
  {Horodecki}, \citenamefont {Paw\l{}owski},\ and\ \citenamefont
  {Ramanathan}}]{Grudka}%
  \BibitemOpen
  \bibfield  {author} {\bibinfo {author} {\bibfnamefont {A.}~\bibnamefont
  {Grudka}}, \bibinfo {author} {\bibfnamefont {K.}~\bibnamefont {Horodecki}},
  \bibinfo {author} {\bibfnamefont {M.}~\bibnamefont {Horodecki}}, \bibinfo
  {author} {\bibfnamefont {P.}~\bibnamefont {Horodecki}}, \bibinfo {author}
  {\bibfnamefont {M.}~\bibnamefont {Paw\l{}owski}}, \ and\ \bibinfo {author}
  {\bibfnamefont {R.}~\bibnamefont {Ramanathan}},\ }\href {\doibase
  10.1103/PhysRevA.90.032322} {\bibfield  {journal} {\bibinfo  {journal} {Phys.
  Rev. A}\ }\textbf {\bibinfo {volume} {90}},\ \bibinfo {pages} {032322}
  (\bibinfo {year} {2014})}\BibitemShut {NoStop}%
\bibitem [{\citenamefont {Ramanathan}\ \emph {et~al.}(2013)\citenamefont
  {Ramanathan}, \citenamefont {Brand\~{a}o},\ and\ \citenamefont
  {Grudka}}]{Ramanathan}%
  \BibitemOpen
  \bibfield  {author} {\bibinfo {author} {\bibfnamefont {R.}~\bibnamefont
  {Ramanathan}}, \bibinfo {author} {\bibfnamefont {F.~G. S.~L.}\ \bibnamefont
  {Brand\~{a}o}}, \ and\ \bibinfo {author} {\bibfnamefont {A.}~\bibnamefont
  {Grudka}},\ }\href@noop {} {\bibfield  {journal} {\bibinfo  {journal}
  {arXiv:1308.4635v1}\ } (\bibinfo {year} {2013})}\BibitemShut {NoStop}%
\bibitem [{\citenamefont {Coudron}\ and\ \citenamefont
  {Yuen}(2014)}]{Coudron2014}%
  \BibitemOpen
  \bibfield  {author} {\bibinfo {author} {\bibfnamefont {M.}~\bibnamefont
  {Coudron}}\ and\ \bibinfo {author} {\bibfnamefont {H.}~\bibnamefont {Yuen}},\
  }\href@noop {} {\bibfield  {journal} {\bibinfo  {journal}
  {arXiv:1310.6755v2}\ } (\bibinfo {year} {2014})}\BibitemShut {NoStop}%
\bibitem [{\citenamefont {Miller}\ and\ \citenamefont
  {Shi}(2014)}]{Miller2014}%
  \BibitemOpen
  \bibfield  {author} {\bibinfo {author} {\bibfnamefont {C.~A.}\ \bibnamefont
  {Miller}}\ and\ \bibinfo {author} {\bibfnamefont {Y.}~\bibnamefont {Shi}},\
  }\href@noop {} {\bibfield  {journal} {\bibinfo  {journal}
  {arXiv:1402.0489v2}\ } (\bibinfo {year} {2014})}\BibitemShut {NoStop}%
\bibitem [{Sup()}]{SuppMat}%
  \BibitemOpen
  \href@noop {} {\enquote {\bibinfo {title} {Supplementary material},}\
  }\BibitemShut {NoStop}%
\bibitem [{\citenamefont {Hardy}(1993)}]{Hardy93}%
  \BibitemOpen
  \bibfield  {author} {\bibinfo {author} {\bibfnamefont {L.}~\bibnamefont
  {Hardy}},\ }\href {\doibase 10.1103/PhysRevLett.71.1665} {\bibfield
  {journal} {\bibinfo  {journal} {Phys. Rev. Lett.}\ }\textbf {\bibinfo
  {volume} {71}},\ \bibinfo {pages} {1665} (\bibinfo {year}
  {1993})}\BibitemShut {NoStop}%
\bibitem [{\citenamefont {Popescu}\ and\ \citenamefont
  {Rohrlich}(1994)}]{Popescu1994}%
  \BibitemOpen
  \bibfield  {author} {\bibinfo {author} {\bibfnamefont {S.}~\bibnamefont
  {Popescu}}\ and\ \bibinfo {author} {\bibfnamefont {D.}~\bibnamefont
  {Rohrlich}},\ }\href {\doibase 10.1007/BF02058098} {\bibfield  {journal}
  {\bibinfo  {journal} {Found. Phys.}\ }\textbf {\bibinfo {volume} {24}},\
  \bibinfo {pages} {379} (\bibinfo {year} {1994})}\BibitemShut {NoStop}%
\bibitem [{\citenamefont {Barrett}\ \emph {et~al.}(2005)\citenamefont
  {Barrett}, \citenamefont {Linden}, \citenamefont {Massar}, \citenamefont
  {Pironio}, \citenamefont {Popescu},\ and\ \citenamefont
  {Roberts}}]{Barrett05}%
  \BibitemOpen
  \bibfield  {author} {\bibinfo {author} {\bibfnamefont {J.}~\bibnamefont
  {Barrett}}, \bibinfo {author} {\bibfnamefont {N.}~\bibnamefont {Linden}},
  \bibinfo {author} {\bibfnamefont {S.}~\bibnamefont {Massar}}, \bibinfo
  {author} {\bibfnamefont {S.}~\bibnamefont {Pironio}}, \bibinfo {author}
  {\bibfnamefont {S.}~\bibnamefont {Popescu}}, \ and\ \bibinfo {author}
  {\bibfnamefont {D.}~\bibnamefont {Roberts}},\ }\href {\doibase
  10.1103/PhysRevA.71.022101} {\bibfield  {journal} {\bibinfo  {journal} {Phys.
  Rev. A}\ }\textbf {\bibinfo {volume} {71}},\ \bibinfo {pages} {022101}
  (\bibinfo {year} {2005})}\BibitemShut {NoStop}%
\bibitem [{\citenamefont {Clauser}\ \emph {et~al.}(1969)\citenamefont
  {Clauser}, \citenamefont {Horne}, \citenamefont {Shimony},\ and\
  \citenamefont {Holt}}]{Clauser1969}%
  \BibitemOpen
  \bibfield  {author} {\bibinfo {author} {\bibfnamefont {J.~F.}\ \bibnamefont
  {Clauser}}, \bibinfo {author} {\bibfnamefont {M.~A.}\ \bibnamefont {Horne}},
  \bibinfo {author} {\bibfnamefont {A.}~\bibnamefont {Shimony}}, \ and\
  \bibinfo {author} {\bibfnamefont {R.~A.}\ \bibnamefont {Holt}},\ }\href@noop
  {} {\bibfield  {journal} {\bibinfo  {journal} {Phys. Rev. Lett.}\ }\textbf
  {\bibinfo {volume} {23}},\ \bibinfo {pages} {880} (\bibinfo {year}
  {1969})}\BibitemShut {NoStop}%
\bibitem [{\citenamefont {Cirel'son}(1980)}]{Tsirelson}%
  \BibitemOpen
  \bibfield  {author} {\bibinfo {author} {\bibfnamefont {B.}~\bibnamefont
  {Cirel'son}},\ }\href {\doibase 10.1007/BF00417500} {\bibfield  {journal}
  {\bibinfo  {journal} {Lett. Math. Phys.}\ }\textbf {\bibinfo {volume} {4}},\
  \bibinfo {pages} {93} (\bibinfo {year} {1980})}\BibitemShut {NoStop}%
\bibitem [{\citenamefont {Christof}\ and\ \citenamefont
  {L\"obel}(1997)}]{Porta}%
  \BibitemOpen
  \bibfield  {author} {\bibinfo {author} {\bibfnamefont {T.}~\bibnamefont
  {Christof}}\ and\ \bibinfo {author} {\bibfnamefont {A.}~\bibnamefont
  {L\"obel}},\ }\href {http://typo.zib.de/opt-long\_projects/Software/Porta/}
  {\enquote {\bibinfo {title} {Porta},}\ } (\bibinfo {year} {1997})\BibitemShut
  {NoStop}%
\bibitem [{\citenamefont {Gruenbaum}(1967)}]{Gruenbaum67}%
  \BibitemOpen
  \bibfield  {author} {\bibinfo {author} {\bibfnamefont {B.}~\bibnamefont
  {Gruenbaum}},\ }\href@noop {} {\emph {\bibinfo {title} {Convex Polytopes}}}\
  (\bibinfo  {publisher} {Springer Science \& Business Media},\ \bibinfo {year}
  {1967})\ Chap.~\bibinfo {chapter} {3}\BibitemShut {NoStop}%
\end{thebibliography}%
\nocite{Gruenbaum67}

\newpage
\onecolumngrid
\appendix

\section{Introduction and notation}
In this supplementary material, we will prove theorems and lemmas necessary to establish the measurement dependent local inequalities that have been used in the maintext to show that the nonlocality exhibited by quantum mechanics cannot be reproduced using a measurement dependent local model.

We now introduce the notation used in this paper. We will only consider the two-party case, but the extension to more parties is immediate. We will denote a probability distribution over the random variable $Z$ by $P_{Z}$ and the probability that $Z$ takes value $z$ by $P_Z(z)$. We will often omit the random variable and just write $P(z)$ if the random variable is clear by context.

The setup considered can be seen in Figure (1) in the maintext. We consider two parties, Alice and Bob, that are spacelike seperated. They have access to two boxes. The boxes each take an input, denoted by $X$ and $Y$ respectively and return an output, $A$ and $B$ respectively. The boxes (and thus their outputs) as well as the inputs can all be correlated with a common strategy $\Lambda$. The relevant quantity that concerns us is $P_{ABXY}$ with
\begin{align}
\label{pabxy}
P(abxy)=\int\dd\lambda\rho(\lambda)P(xy|\lambda)P(ab|xy\lambda).
\end{align}

\section{The measurement dependent local polytope}
\label{mdlpolytope}
In this Letter, we analyse the case where the outputs of the boxes are classically determined by the respective input and a possible common strategy and the amount of correlation between the inputs and the common strategy is bounded. We find that the possible resulting probability distributions form a polytope.

We denote by $\mathcal{P}_Z$ the space of probability distributions over the random variable $Z$ and by $\mathcal{P}_{Z|W}$ the space of conditional probability distributions over the random variable $Z$ conditioned on the random variable $W$.

Let $X, Y, A, B$ be random variables with alphabetsize $n_X,n_Y,n_A,n_B$ respectively.
We define the local polytope
\begin{align}
\label{localpolytope}
\mathcal{L}=\big\{P_{AB|XY}\in\mathcal{P}_{AB|XY}: P(ab&|xy)=\int\dd\lambda\rho(\lambda)P(a|x\lambda)P(b|y\lambda)\nonumber\\
&P_{A|X\Lambda}\in\mathcal{P}_{A|X\Lambda},P_{B|Y\Lambda}\in\mathcal{P}_{B|Y\Lambda}\nonumber\\
&\rho(\lambda)\geq 0\,\forall \,\lambda, \int\dd\lambda\rho(\lambda)=1\big\}
\end{align}
which has the so-called deterministic points as its vertices
\begin{align}
\label{localvertices}
\mathcal{V}_{\mathcal{L}}=\big\{V_{AB|XY}\in\mathcal{P}_{AB|XY} : V(ab|xy)=V(a|x)V(b|y) \text{ with }&V_{A|X}\in\mathcal{P}_{A|X}, V_{B|Y}\in\mathcal{P}_{B|Y} \nonumber\\
&V(a|x)\in\{0,1\}, V(b|y)\in\{0,1\}\big\}.
\end{align}

For $\ell,h\in[0,1]$ s.t. 
\begin{align}
0\leq\ell\leq\frac{1}{n_Xn_Y}\leq h\leq 1,
\label{lhcond}
\end{align}

we define the input polytope
\begin{align}
\label{inputpolytope}
\mathcal{I}(\ell,h)=\big\{P_{XY}\in\mathcal{P}_{XY}: \ell\leq P(xy)\leq h\, \forall\, x,y\big\}.
\end{align}
The fact that $\mathcal{I}(\ell,h)$ is a convex polytope is proven by lemma \ref{pxylemma}. If $\ell=\frac{1}{n_Xn_Y}$ or $h=\frac{1}{n_Xn_Y}$, the polytope consists of only one point given by the uniform distribution $P(xy)=\frac{1}{n_Xn_Y}$ $\forall x,y$. For $\ell<\frac{1}{n_Xn_Y}<h$, let $n=\lfloor\frac{1-n_Xn_Y\ell}{h-\ell}\rfloor$,
\begin{align}
\label{inputlist}
S(\ell,h)=(\underbrace{h,\cdots,h}_{n \text{ times}},\underbrace{\ell,\cdots,\ell}_{(n_Xn_Y-n-1)\text{ times}},1-nh-(n_Xn_Y-n-1)\ell).
\end{align}
and $\Pi_{n_Xn_Y}$ be the set of all permutations of $n_Xn_Y$ elements. Then the set of vertices of $\mathcal{I}(\ell,h)$ is given by
\begin{align}
\label{inputvertices}
\mathcal{V}_{\mathcal{I}}(\ell,h)=\big\{V_{XY}\in\mathcal{P}_{XY}: \exists\pi\in\Pi_{n_Xn_Y}{ s.t. }V_{XY}=\pi(S)\big\}.
\end{align}

\begin{thm}
Let $\mathcal{L}$, $\mathcal{V}_{\mathcal{L}}$, $\mathcal{I}(\ell,h)$ and $\mathcal{V}_{\mathcal{I}}(\ell,h)$ be defined as in (\ref{localpolytope}), (\ref{localvertices}), (\ref{inputpolytope}) and (\ref{inputvertices}).
For $\ell,h\in[0,1]$ fulfilling condition (\ref{lhcond}), define
\begin{align*}
\mathcal{MDL}(\ell,h)=\big\{P_{ABXY}\in\mathcal{P}_{ABXY} :  &P(abxy)=\int\dd\lambda\rho(\lambda)P(xy|\lambda)P(ab|xy\lambda),\\
&P_{XY|\Lambda=\lambda}\in\mathcal{I}(\ell,h)\text{, }P_{AB|XY\Lambda=\lambda}\in\mathcal{L}\text{ }\text{ }\forall\lambda\\
&\rho(\lambda)\geq 0 \forall\lambda, \int d\lambda\rho(\lambda)=1\big\}.
\end{align*}
This is a polytope and the set of vertices is a subset of
\begin{align}\label{Eq:MDLVertices}
\mathcal{V}_{\mathcal{MDL}}(\ell,h)=\big\{V_{ABXY}\in\mathcal{P}_{ABXY}: V(abxy)=V(xy)V(ab|xy)\text{ with }&V_{XY}\in\mathcal{V}_{\mathcal{I}}(\ell,h)\text{ and }V_{AB|XY}\in \mathcal{V}_{\mathcal{L}}\big\}.
\end{align}
\end{thm}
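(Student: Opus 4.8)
The plan is to establish the theorem by proving the two inclusions $\mathcal{MDL}(\ell,h)\subseteq\mathrm{conv}\,\mathcal{V}_{\mathcal{MDL}}(\ell,h)$ and $\mathrm{conv}\,\mathcal{V}_{\mathcal{MDL}}(\ell,h)\subseteq\mathcal{MDL}(\ell,h)$. The key preliminary observation is that $\mathcal{V}_{\mathcal{MDL}}(\ell,h)$ is a \emph{finite} set: it is the image of the Cartesian product $\mathcal{V}_{\mathcal{I}}(\ell,h)\times\mathcal{V}_{\mathcal{L}}$ under the (bilinear) product map, $\mathcal{V}_{\mathcal{L}}$ being finite by the characterization of the local polytope recalled in $(\ref{localvertices})$ and $\mathcal{V}_{\mathcal{I}}(\ell,h)$ being finite as the orbit of the single list $S(\ell,h)$ under $\Pi_{n_Xn_Y}$. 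Once the two inclusions are in hand, $\mathcal{MDL}(\ell,h)=\mathrm{conv}\,\mathcal{V}_{\mathcal{MDL}}(\ell,h)$ is the convex hull of finitely many points, hence a polytope, and its extreme points are automatically a subset of $\mathcal{V}_{\mathcal{MDL}}(\ell,h)$ — only a subset, since some product points may be convex combinations of others.

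For the first (and main) inclusion I would take an arbitrary $P_{ABXY}\in\mathcal{MDL}(\ell,h)$ with a witnessing weight $\rho$ and conditionals $\{P_{XY|\Lambda=\lambda}\}$, $\{P_{AB|XY\Lambda=\lambda}\}$. Since for each $\lambda$ we have $P_{XY|\Lambda=\lambda}\in\mathcal{I}(\ell,h)$, a polytope with vertex set $\mathcal{V}_{\mathcal{I}}(\ell,h)$ by the lemma characterizing $\mathcal{I}(\ell,h)$, and $P_{AB|XY\Lambda=\lambda}\in\mathcal{L}$, a polytope with vertex set $\mathcal{V}_{\mathcal{L}}$, I can write $P_{XY|\Lambda=\lambda}=\sum_i q^\lambda_i V^{(i)}_{XY}$ and $P_{AB|XY\Lambda=\lambda}=\sum_j r^\lambda_j V^{(j)}_{AB|XY}$ with $V^{(i)}_{XY}\in\mathcal{V}_{\mathcal{I}}(\ell,h)$, $V^{(j)}_{AB|XY}\in\mathcal{V}_{\mathcal{L}}$, and $q^\lambda,r^\lambda$ probability vectors. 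Substituting into $P(abxy)=\int\dd\lambda\,\rho(\lambda)P(xy|\lambda)P(ab|xy\lambda)$ and using the bilinearity of the product $P(xy|\lambda)P(ab|xy\lambda)$ yields $P(abxy)=\sum_{i,j}w_{ij}\,V^{(i)}(xy)V^{(j)}(ab|xy)$ with $w_{ij}:=\int\dd\lambda\,\rho(\lambda)q^\lambda_i r^\lambda_j\ge 0$, and $\sum_{i,j}w_{ij}=\int\dd\lambda\,\rho(\lambda)\big(\sum_i q^\lambda_i\big)\big(\sum_j r^\lambda_j\big)=1$. Hence $P_{ABXY}$ is a convex combination of points of $\mathcal{V}_{\mathcal{MDL}}(\ell,h)$. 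A technical point to dispatch here: the coefficients $q^\lambda_i,r^\lambda_j$ must be chosen to depend measurably on $\lambda$ (a measurable selection of a vertex decomposition exists), or, equivalently, one may first reduce to the case of $\Lambda$ with finite range, which is standard in this setting.

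For the reverse inclusion I would first note that every $V_{ABXY}=V_{XY}V_{AB|XY}\in\mathcal{V}_{\mathcal{MDL}}(\ell,h)$ lies in $\mathcal{MDL}(\ell,h)$: simply take $\Lambda$ deterministic (a single value $\lambda_0$) with $P_{XY|\Lambda=\lambda_0}=V_{XY}\in\mathcal{V}_{\mathcal{I}}(\ell,h)\subseteq\mathcal{I}(\ell,h)$ and $P_{AB|XY\Lambda=\lambda_0}=V_{AB|XY}\in\mathcal{V}_{\mathcal{L}}\subseteq\mathcal{L}$. Then I would check that $\mathcal{MDL}(\ell,h)$ is convex: given two of its elements with witnesses over hidden variables $\Lambda_1,\Lambda_2$, a convex mixture with weight $\mu$ is realized by adjoining a binary flag to $\Lambda$ that selects the first witness with probability $\mu$ and copies the corresponding conditionals, the constraints $P_{XY|\Lambda=\lambda}\in\mathcal{I}(\ell,h)$ and $P_{AB|XY\Lambda=\lambda}\in\mathcal{L}$ being preserved pointwise in $\lambda$. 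Therefore $\mathrm{conv}\,\mathcal{V}_{\mathcal{MDL}}(\ell,h)\subseteq\mathcal{MDL}(\ell,h)$, and combining with the first inclusion gives the claimed equality and the polytope structure.

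I expect the crux to be the bilinear weight-combination step of the second paragraph — correctly turning the product of two independent vertex decompositions, integrated against $\rho$, into a single probability distribution over the finite product vertex set — together with the measurability/finite-support caveat mentioned there. The $\supseteq$ direction and the convexity of $\mathcal{MDL}(\ell,h)$ are routine, as is the passage from ``convex hull of a finite set'' to ``polytope with vertices contained in that set''.
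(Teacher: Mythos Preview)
Your proposal is correct and follows essentially the same route as the paper: the paper factors the argument through a general ``combining polytopes'' theorem (Theorem~\ref{polytopethm}), whose three steps---$\mathcal{V}\subset\mathcal{R}$, convex combinations of $\mathcal{V}$ lie in $\mathcal{R}$, and every element of $\mathcal{R}$ decomposes as a convex combination of $\mathcal{V}$ via the bilinear vertex expansion you describe---match your two inclusions exactly. The measurable-selection caveat you raise is a nicety the paper does not address, but the core mechanism (decompose $P_{XY|\Lambda=\lambda}$ and $P_{AB|XY\Lambda=\lambda}$ into vertices, multiply and integrate the coefficients) is identical.
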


\begin{proof}
The correlations in $\mathcal{MDL}(\ell,h)$ are given by
\begin{align*}
P(abxy)=\int\dd\lambda\rho(\lambda)P(xy|\lambda)P(ab|xy\lambda).
\end{align*}
with $P_{AB|XY\Lambda=\lambda}$ and $P_{XY|\Lambda=\lambda}$ each being chosen from a polytope. Hence Theorem~\ref{polytopethm} applies and $\mathcal{MDL}(\ell,h)$ is a convex polytope with vertices  given by the Cartesian product of the vertices of the two constituent polytopes.
\end{proof}

Since $\mathcal{MDL}(\ell,h)$ is a convex polytope, the maximal and minimal value of any linear expression of probabilities is achieved by one of the vertices. The validity of the inequalities presented in the main text can hence be verified by checking that they hold for all the points in $\mathcal{V}_{\mathcal{MDL}}(\ell,h)$.

\subsection{The MDL bound for the CHSH expression}

The bound for CHSH cannot be derived in this way, since we impose the conditions of nonsignaling
\begin{align}
\sum_bP(ab|xy)&=\sum_bP(ab|xy')\text{ } \forall a,x,y,y'\nonumber\\
\sum_aP(ab|xy)&=\sum_aP(ab|x'y)\text{ } \forall a,x,x',y
\label{nosignaling-app}
\end{align}
and uniform inputs
\begin{align}
P(xy)=P(x'y')=\frac{1}{4} \text{ }\forall x,x',y,y'.
\label{pxyuniform-app}
\end{align}
on top of the constraints of measurement dependent locality. The bound given in the main text was found by finding an explicit convex combination of the vertices that additionally fulfilled the conditions (\ref{nosignaling-app}) and (\ref{pxyuniform-app}).

First, we write the CHSH expression using full probabilities as
\begin{align*}
\text{CHSH}_{\text{full}}=\sum_{abxy}(-1)^{a+b+xy}P(abxy).
\end{align*}
Using the vertices of the $\mathcal{MDL}(\ell,h)$, we find that there are 8, 24 or 48 MDL-vertices\footnote{The number depends on the values of $\ell$ and $h$. If the vertices of the input-polytope are permutations of $(h,h,h,1-3h)$ then $1-2\ell'$ is reached by 8 MDL-vertices, for $(h,h,1-2h-\ell,\ell)$ and $(1-3\ell,\ell,\ell,\ell)$ there are 24 and for $(h,1-h-2\ell,\ell,\ell)$ there are 48.} which all reach the maximum value of $1-2\ell'$ for $\text{CHSH}_{\text{full}}$ with $\ell'=\max(1-3h,\ell)$. A uniform mixture of these vertices therefore also has a $\text{CHSH}_{\text{full}}$-value of $1-2\ell'$. It turns out that additionally this uniform mixture is a nonsignaling distribution with $P(xy)=\frac{1}{4}$. Reverting back to conditional probability distributions, we find that this point has a CHSH-value of $4(1-2\ell')$.

\subsection{The MDL polytope assuming Eq.~\eqref{nosignaling-app} and Eq.~\eqref{pxyuniform-app}}

Any convex polytope can be seen as an intersection of finitely many half spaces, in other words by a finite set of linear inequalities. Given the vertices, we can use existing software packages to find these inequalities numerically for fixed values of $\ell$ and $h$. For the case of $n_X=n_Y=n_A=n_B=2$, $h\in]\frac{1}{4},\frac{1}{3}[$ and $\ell=0$, we performed this computation for $h=\frac{2}{7}, \frac{3}{8}, \frac{3}{11}, \frac{3}{10}, \frac{4}{15}, \frac{4}{13}, \frac{5}{16}, \frac{5}{17}$. We then imposed the conditions of nonsignaling (\ref{nosignaling-app}) and uniform inputs (\ref{pxyuniform-app}). These are linear equality constraints and a simple variable elimination in the inequalities that were found numerically is enough to enforce them. Regardless of the value of $h$, we found $8$ families of inequalities, where we say that $2$ inequalities belong to the same family if they can be transformed into each other by a relabeling of the inputs, outputs and/or parties. One of these families corresponds to the fact that probabilities are non-negative and is therefore not interesting for any analysis of physical theories. We conjectured the dependence on $h$ of the other $7$ families. They can be found in Table I in the main text.\footnote{As a consistency check, we have used the explicit characterization of the MDL polytope provided in Eq.~\eqref{Eq:MDLVertices} and verified --- using linear programming --- that for $10^4$ randomly generated values of $h\in]\frac{1}{4},\frac{1}{3}[$, no convex combination of the 64 extreme points satisfying Eq.~\eqref{nosignaling-app} and Eq.~\eqref{pxyuniform-app} can violate the 7 families of inequalities found. Moreover, in all these cases, each inequality can be saturated to, at least, a numerical precision of $7\times10^{-9}$. }

\section{Theorem on combining polytopes}
\label{polytopethmproof}
In this section, we will prove that if the input-distributions $P_{XY|\Lambda=\lambda}$ and output-distributions $P_{AB|XY\Lambda=\lambda}$ are chosen from a polytope for all $\lambda$, then the possible resulting $P_{ABXY}$ given by (\ref{pabxy}) also form a polytope.

\begin{thm}
\label{polytopethm}
Let $\mathcal{I}_{XY}$ and $\mathcal{O}_{AB|XY}$ both be polytopes with their respective vertices being $\left\{V_{XY}^{\lambda'}\right\}_{\lambda'}$ and $\left\{V_{AB|XY}^{\lambda''}\right\}_{\lambda''}$. Then
\begin{align*}
\mathcal{R}_{ABXY} = \big\{P&_{ABXY} :P(abxy)=\int d\lambda \rho(\lambda)P(xy|\lambda)P(ab|xy\lambda)\\
&P_{XY|\Lambda=\lambda}\in \mathcal{I}_{XY}, P_{AB|XY\Lambda=\lambda}\in \mathcal{O}_{AB|XY}, \rho(\lambda)\geq 0 \forall\lambda, \int d\lambda\rho(\lambda)=1\big\}
\end{align*}
is a polytope and its vertices are a subset of
\begin{align*}
\mathcal{V}_{ABXY}=\left\{V_{ABXY}^{(\lambda'\lambda'')} : V^{(\lambda'\lambda'')}(abxy)=V^{\lambda'}(xy)V^{\lambda''}(ab|xy)\right\}_{(\lambda'\lambda'')}.
\end{align*}
\end{thm}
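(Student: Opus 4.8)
The plan is to show the three claimed properties of $\mathcal{R}_{ABXY}$ in turn: that it is contained in the convex hull of $\mathcal{V}_{ABXY}$, that it contains every element of $\mathcal{V}_{ABXY}$, and that it is convex — together these force $\mathcal{R}_{ABXY}$ to be exactly the convex hull of the finite set $\mathcal{V}_{ABXY}$, hence a polytope whose vertices are a subset of $\mathcal{V}_{ABXY}$. The key structural observation is that the defining integral is bilinear in the pair $\bigl(P_{XY|\Lambda=\lambda},\,P_{AB|XY\Lambda=\lambda}\bigr)$, so decomposing each factor into its own vertices and expanding turns a single $\lambda$ into a convex combination of product-vertices $V^{(\lambda'\lambda'')}$.

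First I would fix any $P_{ABXY}\in\mathcal{R}_{ABXY}$ with a witnessing $\rho$ and families $P_{XY|\Lambda=\lambda}$, $P_{AB|XY\Lambda=\lambda}$. Since $\mathcal{I}_{XY}$ is a polytope, for each $\lambda$ write $P(xy|\lambda)=\sum_{\lambda'} q_{\lambda'}(\lambda)\,V^{\lambda'}(xy)$ with $q_{\lambda'}(\lambda)\geq 0$, $\sum_{\lambda'}q_{\lambda'}(\lambda)=1$; similarly $P(ab|xy\lambda)=\sum_{\lambda''} r_{\lambda''}(\lambda)\,V^{\lambda''}(ab|xy)$. Substituting into the integral and using bilinearity gives
\begin{align*}
P(abxy)=\int \dd\lambda\,\rho(\lambda)\sum_{\lambda',\lambda''} q_{\lambda'}(\lambda)\,r_{\lambda''}(\lambda)\,V^{\lambda'}(xy)V^{\lambda''}(ab|xy).
\end{align*}
Defining $w_{\lambda'\lambda''}=\int\dd\lambda\,\rho(\lambda)q_{\lambda'}(\lambda)r_{\lambda''}(\lambda)$, these weights are nonnegative and sum to $1$ (integrate $\rho(\lambda)\sum_{\lambda',\lambda''}q_{\lambda'}(\lambda)r_{\lambda''}(\lambda)=\rho(\lambda)$), so $P_{ABXY}=\sum_{\lambda'\lambda''}w_{\lambda'\lambda''}V^{(\lambda'\lambda'')}_{ABXY}$ lies in the convex hull of $\mathcal{V}_{ABXY}$. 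Conversely, each $V^{(\lambda'\lambda'')}_{ABXY}$ itself belongs to $\mathcal{R}_{ABXY}$: take $\rho$ to be a point mass at a single $\lambda_0$, and set $P_{XY|\Lambda=\lambda_0}=V^{\lambda'}_{XY}$, $P_{AB|XY\Lambda=\lambda_0}=V^{\lambda''}_{AB|XY}$. Convexity of $\mathcal{R}_{ABXY}$ follows by the standard trick of concatenating the hidden variables: given two members with data $(\rho_1,\ldots)$ and $(\rho_2,\ldots)$, form $\rho = p\,\rho_1 \oplus (1-p)\,\rho_2$ on the disjoint union of the two $\Lambda$-spaces, keeping the respective conditional distributions, which realizes $p P^{(1)}_{ABXY}+(1-p)P^{(2)}_{ABXY}$ while each $P_{XY|\Lambda=\lambda}\in\mathcal{I}_{XY}$ and $P_{AB|XY\Lambda=\lambda}\in\mathcal{O}_{AB|XY}$ still hold pointwise.

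Combining the three facts: $\mathcal{R}_{ABXY}\subseteq\operatorname{conv}(\mathcal{V}_{ABXY})$ from the first step, $\mathcal{V}_{ABXY}\subseteq\mathcal{R}_{ABXY}$ from the second, and convexity from the third give $\operatorname{conv}(\mathcal{V}_{ABXY})\subseteq\mathcal{R}_{ABXY}$, so $\mathcal{R}_{ABXY}=\operatorname{conv}(\mathcal{V}_{ABXY})$, a polytope; its vertices are necessarily among the finitely many points of $\mathcal{V}_{ABXY}$ (some products $V^{\lambda'}\otimes V^{\lambda''}$ may fail to be extremal, hence ``subset''). The one point requiring a little care — the main obstacle, such as it is — is the measure-theoretic bookkeeping: ensuring the functions $q_{\lambda'}(\lambda)$, $r_{\lambda''}(\lambda)$ can be chosen measurably in $\lambda$ so that $w_{\lambda'\lambda''}$ is well-defined, and noting that since the polytopes have finitely many vertices the sums over $\lambda',\lambda''$ are finite and no convergence issue arises. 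If one prefers to sidestep measurability entirely, one can equivalently define $\mathcal{R}_{ABXY}$ with $\Lambda$ ranging over a finite set (discrete $\rho$), which suffices because any polytope member already admits a finite decomposition by Carathéodory; I would mention this as a remark rather than belabor the general integral.
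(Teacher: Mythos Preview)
Your proof is correct and follows essentially the same three-part strategy as the paper: show $\mathcal{V}_{ABXY}\subseteq\mathcal{R}_{ABXY}$, show $\operatorname{conv}(\mathcal{V}_{ABXY})\subseteq\mathcal{R}_{ABXY}$, and show $\mathcal{R}_{ABXY}\subseteq\operatorname{conv}(\mathcal{V}_{ABXY})$ via the bilinear expansion into vertex products with weights $w_{\lambda'\lambda''}=\int\dd\lambda\,\rho(\lambda)q_{\lambda'}(\lambda)r_{\lambda''}(\lambda)$. The only cosmetic differences are that you establish the middle inclusion by proving convexity of $\mathcal{R}_{ABXY}$ abstractly (concatenating hidden-variable spaces) whereas the paper verifies it directly for finite mixtures of the $V^{(\lambda'\lambda'')}$, and that you flag the measurability of $q_{\lambda'}(\lambda),r_{\lambda''}(\lambda)$ as a technical caveat the paper leaves implicit.
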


\begin{proof}
The proof consists of 3 steps:

\textit{Step 1} We show that $\mathcal{V}_{ABXY}\subset\mathcal{R}_{ABXY}$.\\
This is clear since $V_{XY}^{\lambda'}\in\mathcal{I}_{XY}$ and $V_{AB|XY}^{\lambda''}\in\mathcal{O}_{AB|XY}$.

\textit{Step 2} We show that every convex combination of elements in $\mathcal{V}_{ABXY}$ is in $\mathcal{R}_{ABXY}$
\begin{align*}
\sum_{(\lambda'\lambda'')}\alpha_{(\lambda'\lambda'')}V_{ABXY}^{(\lambda'\lambda'')} \in \mathcal{R}_{ABXY}.
\end{align*}

Since $V_{XY}^{\lambda'}\in\mathcal{I}_{XY}$ and $V_{AB|XY}^{\lambda''}\in\mathcal{O}_{AB|XY}$, we define 
\begin{align*}
P_{XY|\Lambda=(\lambda'\lambda'')}&=V_{XY}^{\lambda'}\\
P_{AB|XY\Lambda=(\lambda'\lambda'')}&=V_{AB|XY}^{\lambda''}\\
\rho(\lambda)&=\sum_{(\lambda'\lambda'')}\delta(\lambda-(\lambda'\lambda''))\alpha_{(\lambda'\lambda'')}.
\end{align*}

And therefore
\begin{align*}
\sum_{(\lambda'\lambda'')}\alpha_{(\lambda'\lambda'')}V_{ABXY}^{(\lambda'\lambda'')} \in \mathcal{R}_{ABXY}.
\end{align*}

\textit{Step 3} We show that every $P_{ABXY}\in\mathcal{R}_{ABXY}$ can be written as a convex combination of the $V_{ABXY}\in\mathcal{V}_{ABXY}$. 
By definition we can write $\forall P_{ABXY}\in \mathcal{R}_{ABXY}$
\begin{equation}\label{Eq:PABXY}
\begin{split}
P_{ABXY}&=\int\dd\lambda \rho(\lambda) P_{ABXY}^\lambda, \rho(\lambda)\geq 0, \int\dd\lambda\rho(\lambda) = 1\\
P_{ABXY}^\lambda(abxy)&=\sum_{x'x''y'y''}\Phi^{xx'x''yy'y''}P_{XY}^{\lambda}(x'y') P_{AB|XY}(abx''y'')\\
\Phi^{xx'x''yy'y''}&=\delta^{xx'}\delta^{xx''}\delta^{yy'}\delta^{yy''}
\end{split}
\end{equation}
where $P_{XY}^\lambda\in\mathcal{I}_{XY}$ and $P_{AB|XY}\in\mathcal{O}_{AB|XY}$. Since $\mathcal{I}_{XY}$ and $\mathcal{O}_{AB|XY}$ are polytopes, their elements can be written as convex combination of the corresponding extreme points, i.e.,
\begin{equation}\label{Eq:ConvexDecomposition}
\begin{split}
P_{XY}^\lambda &= \sum_{\lambda'}i_{\lambda'}^\lambda V_{XY}^{\lambda'}\\
P_{AB|XY}^\lambda &= \sum_{\lambda''}o_{\lambda''}^\lambda V_{AB|XY}^{\lambda''},
\end{split}
\end{equation}
where $\sum_{\lambda'} i_{\lambda'}^\lambda=\sum_{\lambda''}o_{\lambda''}^\lambda=1$.
Putting Eqs.~\eqref{Eq:PABXY}-\eqref{Eq:ConvexDecomposition} together,  we get that for  $P_{ABXY}\in\mathcal{R}_{ABXY}$
\begin{align*}
P_{ABXY}(abxy) &= \int \dd\lambda\rho(\lambda) \sum_{\lambda',\lambda''}i_{\lambda'}^\lambda o_{\lambda''}^\lambda \sum_{x'x''y'y''}\Phi^{xx'x''yy'y''}V_{XY}^{\lambda'}(x'y') V_{AB|XY}^{\lambda''}(abx''y'')\\
&= \sum_{\lambda',\lambda''}\varphi_{(\lambda'\lambda'')}V^{(\lambda'\lambda'')}\\
V^{(\lambda'\lambda'')} &= \Phi^{xx'x''yy'y''}V_{XY}^{\lambda'}(x'y') V_{AB|XY}^{\lambda''}(abx''y'')\\
\varphi_{(\lambda'\lambda'')} &= \int \dd\lambda\rho(\lambda) i_{\lambda'}^\lambda o_{\lambda''}^\lambda,
\end{align*}
Since $\varphi_{(\lambda'\lambda'')}\geq 0$ $\forall\lambda',\lambda''$ and $\sum_{\lambda',\lambda''}\varphi_{(\lambda'\lambda'')} = 1$, we  have thus shown that every $P_{ABXY}\in\mathcal{R}_{ABXY}$ can be written as a convex combination of the $V_{ABXY}^{(\lambda'\lambda'')}$.

This proves the theorem.
\end{proof}

\section{Lemma on the considered input-distributions}
\label{pxylemmaproof}

\begin{lma}
\label{pxylemma}
Let $\mathcal{I}(\ell,h)$ be defined as in (\ref{inputpolytope}) for $\ell,h\in[0,1]$ fulfilling the condition (\ref{lhcond}). This is a polytope. For $\ell=\frac{1}{n_Xn_Y}$ or $h=\frac{1}{n_Xn_Y}$, the only vertex is $P(xy)=\frac{1}{n_Xn_Y}$ $\forall x,y$. Otherwise the set of vertices is given by $\mathcal{V}_{\mathcal{I}}(\ell,h)$ as defined in (\ref{inputvertices}).
\end{lma}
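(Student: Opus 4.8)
The plan is to regard $\mathcal{I}(\ell,h)$ as a subset of $\mathbb{R}^N$ with $N:=n_Xn_Y$, identifying a distribution $P_{XY}$ with the vector of its values $p_{xy}=P(xy)$. Then $\mathcal{I}(\ell,h)=\{\mathbf p\in\mathbb{R}^N:\sum_{xy}p_{xy}=1,\ \ell\le p_{xy}\le h\ \forall x,y\}$ is a bounded intersection of one hyperplane with finitely many closed half-spaces, hence a polytope; it is nonempty because (\ref{lhcond}) places the uniform distribution $p_{xy}=1/N$ inside it. It then remains to identify the vertices, and for this I would first prove the characterization: a point $\mathbf p\in\mathcal{I}(\ell,h)$ is a vertex if and only if at most one of its coordinates lies strictly inside the open interval $(\ell,h)$.

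For the ``only if'' direction of that characterization, if two coordinates $p_i,p_j$ were both strictly interior then, for $\delta>0$ small enough, $\mathbf p\pm\delta(e_i-e_j)$ both lie in $\mathcal{I}(\ell,h)$ and average to $\mathbf p$, so $\mathbf p$ is not extreme. For the ``if'' direction, suppose $\mathbf p=\tfrac12(\mathbf q+\mathbf r)$ with $\mathbf q,\mathbf r\in\mathcal{I}(\ell,h)$: any coordinate of $\mathbf p$ equal to the extreme value $\ell$ (resp. $h$) forces the corresponding coordinates of both $\mathbf q$ and $\mathbf r$ to equal $\ell$ (resp. $h$), and the single remaining coordinate is then pinned down by $\sum p_{xy}=1$; hence $\mathbf q=\mathbf r=\mathbf p$ and $\mathbf p$ is extreme.

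Given this, I would enumerate the admissible points. Up to a permutation of the coordinates, any such point has the form $(\underbrace{h,\dots,h}_{n},\underbrace{\ell,\dots,\ell}_{N-1-n},v)$ with $v=1-nh-(N-1-n)\ell$ for some integer $n\in\{0,\dots,N-1\}$, and feasibility means $\ell\le v\le h$. With the strictly increasing function $g(j):=jh+(N-j)\ell$, one checks $v\ge\ell\iff g(n)\le 1$ and $v\le h\iff g(n+1)\ge 1$; since $g(0)=N\ell\le 1$ and $g(N)=Nh\ge 1$ by (\ref{lhcond}), the admissible $n$ is forced to be the largest integer $j$ with $jh+(N-j)\ell\le 1$, i.e. $n=\lfloor(1-N\ell)/(h-\ell)\rfloor$, which is exactly the value in (\ref{inputlist}). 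Therefore the vertices are precisely the permutations of $S(\ell,h)$, namely the set $\mathcal{V}_{\mathcal{I}}(\ell,h)$. (When $(1-N\ell)/(h-\ell)$ is itself an integer, $v$ collapses to $\ell$ and one merely recovers the permutations of $(h^n,\ell^{N-n})$, still consistent with $S$.)

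Finally I would treat $\ell=1/N$ or $h=1/N$ separately, since then $h-\ell$ may vanish and the expression for $n$ is undefined: if all $p_{xy}\ge\ell=1/N$ and they sum to $1$, every $p_{xy}$ equals $1/N$, and likewise if all $p_{xy}\le h=1/N$; so $\mathcal{I}(\ell,h)$ degenerates to the single point given by the uniform distribution, which is then its only vertex. The only genuine obstacle is the elementary bookkeeping in the third step — matching the feasibility window $\ell\le v\le h$ to the floor formula for $n$ and checking the boundary cases where $g(n)=1$ or $g(n+1)=1$, in which the ``free'' coordinate merges with the block of $\ell$'s or $h$'s — but everything else is routine.
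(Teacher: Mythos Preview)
Your proof is correct and follows essentially the same route as the paper's: both show that a vertex of $\mathcal{I}(\ell,h)$ must have all but at most one coordinate pinned to $\ell$ or $h$, then use normalization to force $n=\lfloor(1-N\ell)/(h-\ell)\rfloor$ and handle the degenerate and integer-ratio boundary cases in the same way. The only difference is that the paper justifies the vertex characterization by invoking the standard polytope-theoretic fact (citing Gr\"unbaum) that a vertex of an $(N{-}1)$-dimensional polytope saturates at least $N{-}1$ of the defining inequalities, whereas you give a self-contained extremality argument via the perturbation $\mathbf p\pm\delta(e_i-e_j)$; your version is slightly more elementary, the paper's slightly shorter given the citation.
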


\begin{proof}

The fact that probability distributions are normalized, i.e.,
\begin{align*}
\sum_{x,y}P(xy)=1  \text{ }\forall\, P_{XY}\in\mathcal{P}_{XY}
\end{align*}
implies that $\mathcal{I}(\frac{1}{n_Xn_Y},h)=\mathcal{I}(\ell,\frac{1}{n_Xn_Y})$ consists of only one point: $P(xy)=\frac{1}{n_Xn_Y}$. This proves the lemma for this case.

For $\ell<\frac{1}{n_Xn_Y}<h$, we note first that $\mathcal{I}(\ell,h)$ is a convex polytope since it is defined by linear constraints. These linear constraints are
\begin{subequations}\label{Eq:PolytopeConstraints}
\begin{align}
P(xy)&\geq \ell \text{ }\forall\, x,y\\
P(xy)&\leq h \text{ }\forall\, x,y\\
\sum_{xy}P(xy)&=1.\label{Eq:Normalization}
\end{align}
\end{subequations}
Due to the equality constraint, Eq.~\eqref{Eq:Normalization}, the dimension of the polytope is $n_Xn_Y-1$. Thus, $V$ is a vertex of $\mathcal{I}(\ell,h)$ if and only if it saturates at least $n_Xn_Y-1$ of the inequalities in Eq.~\eqref{Eq:PolytopeConstraints} \cite{Gruenbaum67}. Therefore we find that every vertex, when written as a vector in an $n_Xn_Y$-dimensional space, is a permutation of
\begin{align*}
S_n(\ell,h)=(\underbrace{h,\cdots,h}_{n \text{ times}},\underbrace{\ell,\cdots,\ell}_{(n_Xn_Y-n-1)\text{ times}},f),
\end{align*}
where $n$ is a positive integer and we have $\ell\leq f\leq h$,  $nh+(n_Xn_Y-n-1)\ell+f=1$ due to the polytope constraints in Eq.~\eqref{Eq:PolytopeConstraints}. The equality constraint implies
\begin{align}
\label{f}
f=1-nh-(n_Xn_Y-n-1)\ell.
\end{align}
It remains to show that the only possible value for $n$ is $n=\lfloor\frac{1-n_Xn_Y\ell}{h-\ell}\rfloor$.
Replacing $f$ in the inequality constraint by using (\ref{f}) we find
\begin{align*}
\ell\leq 1-nh-(n_Xn_Y-n-1)\ell\leq h\\
\Leftrightarrow \frac{1-n_Xn_Y\ell}{h-\ell}-1\leq n\leq \frac{1-n_Xn_Y\ell}{h-\ell}.
\end{align*}
Note that since $\ell<\frac{1}{n_Xn_Y}<h$, we have that $\frac{1-n_Xn_Y\ell}{h-\ell}>0$. We distinguish two cases:

$\frac{1-n_Xn_Y\ell}{h-\ell}\notin\mathbb{N}$: Then $n$ is an integer that lies between 2 non-integer real numbers whose difference is $1$. Therefore we get that $n=\lfloor\frac{1-n_Xn_Y\ell}{h-\ell}\rfloor=\lceil\frac{1-n_Xn_Y\ell}{h-\ell}-1\rceil$. For this case the lemma is proven.

$\frac{1-n_Xn_Y\ell}{h-\ell}\in\mathbb{N}$: In this case, both $n=\frac{1-n_Xn_Y\ell}{h-\ell}$ and $n=\frac{1-n_Xn_Y\ell}{h-\ell}-1$ are valid solutions. However, we find that in the first case, Eq. (\ref{f}) implies $f=\ell$ while in the second case it implies $f=h$. Therefore $S_{\frac{1-n_Xn_Y\ell}{h-\ell}}(\ell,h)=S_{\frac{1-n_Xn_Y\ell}{h-\ell}-1}(\ell,h)$ and both solutions yield the same set of vertices. We can thus set $n=\frac{1-n_Xn_Y\ell}{h-\ell}=\lfloor\frac{1-n_Xn_Y\ell}{h-\ell}\rfloor$.

This proves the lemma.

\end{proof}

\end{document}